\newcommand{\dual}[1]{{{#1}^\perp}}
\newcommand{\dualk}[2]{{{#1}^\perp_{#2}}}
\newcommand{\ip}[1]{\angles{#1}}
\DeclareMathOperator{\supp}{{\mathrm supp}}
\newcommand{\dist}[1]{\Delta({#1})}
\renewcommand{\F}{{\mathbb{F}}}
\newcommand {\flipalg}{\mbox{{\sf Flip Algorithm}}\xspace}
\newcommand{\deff}{\rm{def}}
\newcommand{\codeword}{x}
\begin{document}

\title{LP decoding of expander codes: a simpler proof}
\author{Michael Viderman\thanks{Research supported by an ERC-2007-StG grant number 202405.} \\ Computer Science Department \\
Technion --- Israel Institute of Technology \\
Haifa 32000, Israel \\
{\tt viderman@cs.technion.ac.il}}

\maketitle

\begin{abstract}
A code $C \subseteq \F_2^n$ is a $(c,\epsilon,\delta)$-expander code if it has a Tanner graph, where every variable node has degree $c$, and every subset of variable nodes $L_0$ such that $|L_0|\leq \delta n$ has at least $\epsilon c |L_0|$ neighbors.

Feldman et al. (IEEE IT, 2007) proved that LP decoding corrects $\frac{3\epsilon-2}{2\epsilon-1} \cdot (\delta n-1)$ errors of $(c,\epsilon,\delta)$-expander code, where $\epsilon > \frac{2}{3}+\frac{1}{3c}$. 

In this paper, we provide a simpler proof of their result and show that this result holds for every expansion parameter $\epsilon > \frac{2}{3}$.
\end{abstract}

\section{Introduction}
The central algorithmic problem in coding theory is the explicit construction of error-correcting codes with best possible parameters together with fast encoding and decoding algorithms. Recently, this area has benefited
enormously from insights and viewpoints originating in complexity theory, and numerous interconnections between codes and complexity theory have been discovered, which are surveyed for example in \cite{Sud00,Tre04}. The
former survey \cite{Sud00} focuses on a notion called list-decoding and the second survey \cite{Tre04} focuses mainly on sub-linear algorithms for local testing and local decoding. Basically, there are two different kinds of
noise models: adversarial and probabilistic. In this paper we consider an \emph{adversarial noise model} where we only assume a bound on the number of errors and not how they are distributed. We refer a reader to the seminal paper of Richardson and Urbanke \cite{RichardsonU01} for details concerning \emph{probabilistic noise model}.

Surprisingly, all known constructions of asymptotically good error-correcting codes which can be decoded in linear time when a constant fraction of symbols is adversarially corrupted are based on expander codes.  
In particular, when the parity check graph of the low density parity check (LDPC) code has good expansion properties the associated code is called an expander code (see Definition \ref{def:expansion}). Expander codes and their decoding algorithms are often (implicitly) involved as basic building blocks in the constructions of asymptotically good codes which are linear-time decodable.

More formally, a linear code $C \subseteq \F_2^n$ is a $(c,\epsilon,\delta)$-expander code if it has a Tanner graph \cite{Tan81} $G=(L,R,E)$, where every variable node $l\in L$ has degree $c$ and every subset of variable nodes $L_0 \subseteq L$ such that $|L_0|\leq \delta n$ has at least $\epsilon c |L_0|$ neighbors.

The celebrated result of Sipser and Spielman \cite{SipSpielman} showed that regular expander codes with expansion parameter $\epsilon > 3/4$ are decodable in linear time from $(2\epsilon -1) \lfloor \delta n \rfloor$ errors. They achieved this by defining an extremely simple decoding algorithm, called \flipalg. Feldman et al. \cite{FMSSW07} showed that the linear programming (LP) decoding (suggested in \cite{FWK05}) corrects $\frac{3\epsilon-2}{2\epsilon-1} \cdot \lfloor \delta n \rfloor$ errors in {\em polynomial time} if the underlying code is a $(c,\epsilon,\delta)$-expander 
with expansion parameter $\epsilon > \frac{2}{3} + \frac{1}{3c}$. It is worth to notice that the running time of the LP decoder was strongly improved later (see e.g. \cite{Bur09,TSS09,Von08}), although linear running time was not achieved. Then it was shown that similar result can be obtained using a different decoding algorithm \cite{Vid12}, which in particular runs in linear time. Moreover, \cite[Proposition D.1]{Vid12} points out that some regular expander codes with expansion parameter $1/2$ have minimum Hamming distance 2 and thus cannot correct even a single error. This leaves open the intriguing question: whether all regular expander codes with expansion parameter more than $1/2$ are polynomial time decodable from the constant fraction of errors.

While the decoding algorithm provided in \cite{Vid12} reaches better (proven) result than LP decoding in \cite{FMSSW07} with regards to regular expander codes, the LP decoder still looks much more powerful. Hence we might hope that the future of studying of LP decoding capabilities (see \cite{ADS09,KoVo06a,KoVo06b,KoVo06c,HalE11}) will resolve the above question. 

The proof of the main result of Feldman et al. \cite{FMSSW07} was indirect and took the LP dual witness twice by using the min-cut max-flow theorem. 
We show a straightforward proof for the result of \cite{FMSSW07} which in particular holds for every expansion parameter $\epsilon > 2/3$. We think that our simplified proof sheds more light on the intuition lying behind the LP decoding of expander codes.

\section{Preliminaries}\label{sec:expanders}\label{sec:prelim}
Let $\F_2$ be the binary field and $[n]$ be the set $\set{1,\ldots,n}$. In this work, we consider only linear codes. Let $C\subseteq \F^n$ be a linear code over a field $\F$. For $w \in \F^n$, let $\supp(w)=\{i\in [n]~|~w_i \neq 0\}$ and $|w|=|\supp(w)|$. We define the \emph{distance} between two words $x,y \in \F^n$ to be $\dist{x,y} = |\{i\ |\ x_i \neq y_i\}|$. The minimum distance of a code is defined by $\displaystyle \dist{C} = \min_{x\neq y \in C} \dist{x,y}$. The dual code is defined by $\displaystyle \dual{C}=\left\{u \in \F^n\ |\ \forall c \in C:\ \ip{u,c}=0 \right\}$. For $T \subseteq \F^n$ we say that $w \perp T$ if for all $t\in T$ we have $\ip{w,t}=0$.

Now we define expanders and expander codes. We start from the definition of \emph{expanders} and then proceed to the definition of \emph{expander codes}.

\begin{definition}[Expander Codes]\label{def:expansion}
Let $C \subseteq \F_2^n$ be a linear code and let $G=(L,R,E)$ be its Tanner graph \cite{Tan81}, where $L=[n]$ represents the variable nodes and $R \subseteq \dual{C}$ represents the parity check nodes. Note that for 
every $\codeword \in \F_2^n$ we have $\codeword \perp R$ if and only if $\codeword \in C$. For $l\in L$ and $r\in R$ it holds that $\set{l,r} \in E$ if and only if $l \in \supp(r)$. 
For $T \subseteq L \cup R$ and $x\in L \cup R$, let
\begin{itemize}
\item $N(T) = \set{x_1\in L \cup R \ |\ \set{x_1,x_2}\in E \text{ \ for some \ } x_2 \in T}$ be the set of neighbors of $T$,

\item $N(x) = N(\set{x})$ be the set of neighbors of the node $x$.
\end{itemize}

Let $\epsilon,\delta > 0$ be constants. Then, $G$ is called a \emph{$(c,\epsilon,\delta)$-expander} if every vertex $l \in L$ has degree $c$ and for all subsets $S \subseteq L$ such that $|S|\leq \delta n$ we have $|N(S)| \geq \epsilon \cdot c|S|$.

We say that a code $C$ is a {\em $(c,\epsilon,\delta)$-expander code} if it has a parity check graph that is a $(c,\epsilon, \delta)$-expander.
\end{definition}

\section{Main Result}\label{sec:mainres}
Feldman et al. \cite{FMSSW07} proved the following theorem.

\begin{theorem}[\cite{FMSSW07}]\label{thm:feldman}
If $C \subseteq \F_2^n$ is a $(c,\epsilon,\delta)$-expander code, where $\epsilon > \frac{2}{3}+\frac{1}{3c}$ and $\epsilon c$ is an integer then $C$ is decodable from at most  $\frac{3\epsilon-2}{2\epsilon-1} \cdot (\lfloor \delta n \rfloor -1)$ errors (in polynomial time) by LP decoding.
\end{theorem}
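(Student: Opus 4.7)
The plan is to use the standard pseudocodeword characterization of LP decoding: the LP uniquely returns the transmitted codeword if and only if every nonzero vector $\omega$ in the fundamental cone of the Tanner graph satisfies $\sum_{i \in E}\omega_i < \sum_{i \notin E}\omega_i$, where $E$ is the error support. Assume without loss of generality that the all-zero codeword is transmitted, let $\alpha = \tfrac{3\epsilon-2}{2\epsilon-1}$, and suppose for contradiction that some nonzero $\omega \ge 0$ in the fundamental cone has $\sum_{i \in E}\omega_i \ge \sum_{i \notin E}\omega_i$ while $|E| \le \alpha(\lfloor \delta n \rfloor - 1)$.

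First I would show that the support $F = \supp(\omega)$ must be large, namely $|F| > \delta n$. The fundamental cone inequality $\omega_v \le \sum_{v' \in \supp(r) \setminus \{v\}}\omega_{v'}$ forces every check $r \in N(F)$ to have at least two neighbors in $F$: otherwise the would-be unique neighbor is forced to have $\omega_v = 0$. A double count of the edges between $F$ and $N(F)$ then gives $2|N(F)| \le c|F|$, which combined with the expansion bound $|N(F)| \ge \epsilon c|F|$ (valid if $|F| \le \delta n$) would force $\epsilon \le \tfrac{1}{2}$, contradicting $\epsilon > \tfrac{2}{3}$.

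Next I would partition $F = G \sqcup H$ with $G = F \cap E$ and $H = F \setminus E$. Since $|G| \le |E| < \delta n$, expansion applied to $G$ together with the standard unique-neighbor argument yields $|U_G| \ge (2\epsilon - 1)c|G|$, where $U_G$ denotes the set of checks $r$ with $|\supp(r) \cap G| = 1$. For each such $r$, with unique $G$-neighbor $v_r$, the cone constraint at $v_r$ specializes to $\omega_{v_r} \le \sum_{v' \in \supp(r) \cap H}\omega_{v'}$, because all other $F$-neighbors of $r$ must lie in $H$.

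The heart of the argument, and the step I expect to be the main obstacle, is converting these local inequalities into the global comparison $\sum_{i \in G}\omega_i < \sum_{i \in H}\omega_i$ that produces the contradiction. A naive double count bounds the right-hand side by $c\sum_H\omega$ and only recovers the classical threshold $\epsilon > \tfrac{2}{3}+\tfrac{1}{3c}$ of Feldman et al. To reach the optimal threshold $\epsilon > \tfrac{2}{3}$, I would combine these local inequalities with the size bound $|H| > (1-\alpha)\delta n$, inherited from $|F|>\delta n$ and $|G|\le|E|$, so that the $U_G$-to-$H$ edges are spread across sufficiently many distinct variables of $H$ and the per-variable factor of $c$ is avoided. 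The resulting global inequality should then rearrange to $\alpha < \tfrac{3\epsilon-2}{2\epsilon-1}$, contradicting the assumption on $|E|$ and completing the proof.
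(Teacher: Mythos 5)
Your proposal takes a genuinely different route from the paper: you argue on the primal side, via the pseudocodeword (fundamental cone) characterization of LP decoding success, whereas the paper uses the dual-witness characterization (Theorem \ref{thm:implicit}) and builds a feasible edge-weight assignment from an $(\epsilon c)$-matching with respect to the enlarged set $U'=U\cup\hat U$, the matching being supplied by the polygamous form of Hall's theorem. Your first two steps are sound: every check in $N(F)$ must have at least two neighbours in $F$, so $|F|>\delta n$; and the unique-neighbour bound $|U_G|\geq(2\epsilon-1)c|G|$ together with the specialization of the cone constraint at unique-neighbour checks is correct. However, there is a genuine gap exactly at the step you flag as ``the main obstacle,'' and the fix you sketch does not close it.

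Concretely: summing the local inequalities $\omega_{v_r}\leq\sum_{v'\in\supp(r)\cap H}\omega_{v'}$ over $r\in U_G$ gives a left-hand side equal to $\sum_{v\in G}n_v\omega_v$, where $n_v$ is the number of checks whose unique $G$-neighbour is $v$. Expansion controls only the aggregate $\sum_{v}n_v=|U_G|\geq(2\epsilon-1)c|G|$; individual $n_v$ can be zero, and in particular the variables of $G$ carrying most of the $\omega$-mass may have small $n_v$, so you cannot conclude $(2\epsilon-1)c\sum_{v\in G}\omega_v\leq c\sum_{v\in H}\omega_v$. Even if you could, that inequality only gives $\sum_{G}\omega\leq\frac{1}{2\epsilon-1}\sum_{H}\omega$, which for $\epsilon<1$ is weaker than the required $\sum_{G}\omega<\sum_{H}\omega$ and yields no contradiction; the hypothesis $|E|\leq\frac{3\epsilon-2}{2\epsilon-1}(\lfloor\delta n\rfloor-1)$ must enter through a comparison of cardinalities with $\omega$-sums, and that link is exactly what is missing. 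Your proposed remedy --- using $|H|>(1-\alpha)\delta n$ to argue that the $U_G$-to-$H$ edges are ``spread out'' --- does not help, because the right-hand sides are weighted by $\omega_{v'}$ and a large $H$ can still concentrate its $\omega$-mass on few variables. The paper avoids all of this by absorbing the troublesome non-error variables (those with at least $(2\epsilon-1)c$ neighbours in $N(U)$) into $\hat U$, showing $|U\cup\hat U|\leq\delta n$ in Proposition \ref{prop:boundset} (which is where the bound $\frac{3\epsilon-2}{2\epsilon-1}$ on the error fraction is actually used), and then doing a per-variable case analysis with a single scalar $x$ in Proposition \ref{prop:feasweight}, which is where the threshold $\epsilon>2/3$ (and the weaker $\epsilon>\frac{2}{3}+\frac{1}{3c}$ of \cite{FMSSW07}) comes from.
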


Their proof took the LP dual witness twice and in particular, they applied min-cut max-flow theorem to find such a witness. Feldman et al. \cite{FMSSW07} mentioned that it would be interesting to see a more direct proof of this result.

We reprove their result in the more straightforward manner and obtain a small improvement to the required expansion parameter, i.e., we assume only that $\epsilon > 2/3$.

\begin{theorem}[Main Theorem]\label{thm:main}
If $C \subseteq \F_2^n$ is a $(c,\epsilon,\delta)$-expander code, where $\epsilon > \frac{2}{3}$ and $\epsilon c$ is an integer then $C$ is decodable from at most $\frac{3\epsilon-2}{2\epsilon-1} \cdot (\lfloor \delta n \rfloor-1)$ errors (in polynomial time) by LP decoding.
\end{theorem}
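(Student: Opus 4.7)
By the coordinate-flip symmetry of LP decoding I may assume the zero codeword was transmitted, so that the received word equals the error pattern $y$ with $E:=\supp(y)$ and $|E|\le \alpha := \tfrac{3\epsilon-2}{2\epsilon-1}(\lfloor \delta n\rfloor-1)$. It then suffices to show that $\mathbf{0}$ strictly beats every other point of the fundamental polytope $P$, i.e., for every nonzero $f\in P$, writing $F:=\supp(f)$, $F_1:=F\cap E$, $F_0:=F\setminus E$, that
\[
\sum_{i\in F_0} f_i \;>\; \sum_{i\in F_1} f_i.
\]
Throughout I would use only the singleton odd-subset LP constraint: for every check $r$ and every $i\in N(r)$, $f_i \le \sum_{j\in N(r)\setminus\{i\}} f_j$.

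\emph{Step 1 (support bound).} The first move is to show $|F|>\lfloor \delta n\rfloor$ for every nonzero $f\in P$. If $|F|\le\lfloor\delta n\rfloor$ then by expansion $|N(F)|\ge\epsilon c|F|$, so standard double-counting yields at least $(2\epsilon-1)c|F|\ge 1$ unique-neighbor checks of $F$ (using only $\epsilon>1/2$). For such a check $r$ with unique $F$-neighbor $i$, the singleton constraint forces $f_i\le 0$, contradicting $i\in F$.

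\emph{Step 2 (expansion at a $\lfloor\delta n\rfloor$-sized superset of $F_1$).} Since $|F_1|\le|E|\le\alpha<\lfloor\delta n\rfloor$ and $|F|>\lfloor\delta n\rfloor$, I fix a set $T$ with $F_1\subseteq T\subseteq F$ and $|T|=\lfloor\delta n\rfloor$. Apply expansion to $T$: the set of unique-neighbor checks of $T$ has size $\ge(2\epsilon-1)c|T|$ and the set of multi-neighbor checks has size $\le(1-\epsilon)c|T|$. For each unique-neighbor check $r$ with its sole $T$-neighbor $i_r$, the singleton constraint gives
\[
f_{i_r}\;\le\;\sum_{j\in N(r)\setminus T} f_j\;=\;\sum_{j\in N(r)\cap(F_0\setminus T)}\!\! f_j,
\]
since $f$ vanishes off $F$ and $F\setminus T\subseteq F_0$. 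Summing these inequalities over the unique-neighbor checks of $T$ and observing that each $j\in F_0\setminus T$ is counted at most $c$ times on the right produces a weighted inequality between $\sum_{i\in T} w_i\, f_i$ and $\sum_{j\in F_0} f_j$, where $w_i$ counts the unique-neighbor checks of $T$ whose sole $T$-neighbor is $i$ and averages to at least $(2\epsilon-1)c$. Combining this with the split $T=F_1\sqcup(T\cap F_0)$, the sizes $|T|=\lfloor\delta n\rfloor$ and $|F_1|\le\alpha$, and the explicit formula for $\alpha$ should yield the desired strict inequality $\sum_{F_0}f_i>\sum_{F_1}f_i$.

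\emph{Main obstacle.} The hardest part is the precise counting in Step 2: the inequality coming from singleton constraints is ``averaged'' over unique-neighbor checks and does not automatically separate the $F_1$-contribution from the $T\cap F_0$-contribution. To reach the tight threshold $\epsilon>2/3$ (improving \cite{FMSSW07}'s $\epsilon>2/3+1/(3c)$), I expect the proof has to exploit the fact that $\tfrac{3\epsilon-2}{2\epsilon-1}=1-\tfrac{1-\epsilon}{2\epsilon-1}$ is exactly the ``surplus'' of unique- over multi-neighbor checks guaranteed by the expansion; this likely requires either choosing $T$ adaptively so that the weights $w_i$ concentrate on $F_1$, or invoking odd-subset LP constraints for larger $S$ at the multi-neighbor checks of $T$ to squeeze the $(2\epsilon-1):(1-\epsilon)$ gap.
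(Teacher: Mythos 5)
Your proposal is a \emph{primal} argument (show directly that every nonzero point $f$ of the fundamental polytope has positive cost), whereas the paper's proof is entirely \emph{dual}: it defines $\hat{U}=\{i\notin U : |N(i)\cap N(U)|\ge(2\epsilon-1)c\}$, shows $|U\cup\hat U|\le\delta n$ (Proposition \ref{prop:boundset}), extracts an $(\epsilon c)$-matching with respect to $U\cup\hat U$ via the polygamous form of Hall's theorem, and converts that matching into a feasible edge-weight assignment (Proposition \ref{prop:feasweight}) certifying LP success through Theorem \ref{thm:implicit}; the threshold $\epsilon>2/3$ falls out of the inequality $2\epsilon-1>1-\epsilon$ in the weight analysis. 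Your Step 1 (any nonzero $f\in P$ has support larger than $\lfloor\delta n\rfloor$, via unique-neighbor checks and the singleton odd-set constraints) is correct and is a nice observation, but it is not part of the paper's argument.

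The problem is that your Step 2 has a genuine gap, and it is exactly the gap you flag as the ``main obstacle'': summing the singleton constraints over the unique-neighbor checks of $T$ yields $c\sum_{j\in F_0\setminus T}f_j\ \ge\ \sum_{i\in T}w_if_i$ with $\sum_i w_i\ge(2\epsilon-1)c|T|$ and $0\le w_i\le c$, but nothing prevents the weights from vanishing on all of $F_1$. Quantitatively, up to $(2-2\epsilon)|T|$ coordinates of $T$ may receive $w_i=0$, and since $|F_1|\le\frac{3\epsilon-2}{2\epsilon-1}(\lfloor\delta n\rfloor-1)<(2-2\epsilon)\lfloor\delta n\rfloor$ for all $\epsilon\in(2/3,1)$, the adversarial case $w_i=0$ on $F_1$ (with $f_i=1$ there and $f$ tiny on $F_0$) is not excluded by your inequality, which then gives no lower bound on $\sum_{F_0}f$ in terms of $\sum_{F_1}f$ at all. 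The two repairs you suggest (choosing $T$ adaptively, or invoking larger odd-subset constraints) are speculative and carry the entire burden of the theorem; in effect the paper's set $\hat U$ and the Hall-theorem matching are precisely the device that pins the expansion surplus to the error coordinates, and your sketch has no analogue of it. As written, the proposal does not constitute a proof.
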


In Section \ref{sec:implicitres} we recall an implicit result (Theorem \ref{thm:implicit}) that was shown in \cite{FMSSW07}. We shall use Theorem \ref{thm:implicit} in the proof of Theorem \ref{thm:main}, which appears in Section \ref{sec:proofmainthm}.

\section{Implicit result of Feldman et al. \cite{FMSSW07}}\label{sec:implicitres}
Let $C \subseteq \F_2^n$ be a code and $G=(L=[n],R,E)$ be its parity check graph. Assume that $w\in \F_2^n$ is an input word (a corrupted codeword) and $U\subseteq [n]$ is a set of error coordinates, i.e., if all bits in $w|_U$ were flipped then $w$ would be the codeword of $C$. We say that $\gamma_i = -1$ if $i\in U$ and $\gamma_i = +1$ if $i\notin U$.

\begin{definition}[Feasible weights]\label{def:feasiblewei}
Let $G=(L,R,E)$ be the parity check graph of the code $C \subseteq \F_2^n$. A setting of edge weights $\displaystyle \set{\tau_{i,j}}_{(i,j)\in E}$ is called \emph{feasible} with respect to $U$ if
\begin{enumerate}
\item For all $j\in R$ and distinct $i,i'\in N(j)$ we have $\tau_{i,j} + \tau_{i',j} \geq 0$.

\item For all $i\in L$ we have $\sum_{j\in N(i)} \tau_{i,j} < \gamma_i$.
\end{enumerate}
\end{definition}

The following theorem was implicit in \cite{FMSSW07} and was argued in \cite[Proposition 2]{FMSSW07} and by the discussions in \cite[Sections 2 and 3]{FMSSW07}.

\begin{theorem}[Implicit in \cite{FMSSW07}]\label{thm:implicit}
If for all $U \subseteq [n]$, $|U| \leq \alpha n$ there exists a feasible settings of weights $\displaystyle \set{\tau_{i,j}}_{(i,j)\in E}$ then LP decoding for the code $C$ corrects $\alpha n$ errors. Moreover, if $C$ is an LDPC code then LP decoding runs in polynomial time.
\end{theorem}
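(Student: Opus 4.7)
The plan is to exhibit, from any feasible weight assignment $\{\tau_{i,j}\}$, a dual LP certificate that witnesses optimality and uniqueness of the transmitted codeword in the Feldman--Wainwright--Karger LP-decoding polytope, and then appeal to LP duality.

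First, I would write down the standard primal LP: variables $f_i \geq 0$ for $i \in L$ together with auxiliary variables $w_{j,S} \geq 0$ for each check $j \in R$ and each even-weight $S \subseteq N(j)$, with check constraints $\sum_{S} w_{j,S} = 1$, edge-consistency constraints $f_i = \sum_{S \ni i} w_{j,S}$ for every $(i,j) \in E$, and objective $\min \sum_i \gamma_i f_i$. By linearity of $C$ I may change variables so that the transmitted codeword is $0^n$; then LP decoding succeeds exactly when this point is the unique LP minimizer. The dual has free variables $\mu_j$ (one per check) and $\sigma_{i,j}$ (one per edge), with constraints $\mu_j + \sum_{i \in S} \sigma_{i,j} \geq 0$ from each $w_{j,S}$ and $\sum_{j \in N(i)} \sigma_{i,j} \leq \gamma_i$ from each $f_i$.

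Given feasible weights, I would set $\sigma_{i,j} := \tau_{i,j}$ and $\mu_j := 0$. Condition 1 of Definition \ref{def:feasiblewei} forces at most one $\tau_{i,j}$ per check $j$ to be negative, since any two negative weights at $j$ would violate $\tau_{i,j} + \tau_{i',j} \geq 0$. Pairing that single negative weight with any other element of $S$ therefore gives a non-negative partial sum, and the remaining weights in $S$ are non-negative, so $\sum_{i \in S} \tau_{i,j} \geq 0$ for every $S \subseteq N(j)$ with $|S| \geq 2$, which certifies the check-side dual constraints. Condition 2 is exactly the variable-node dual constraint, and its strict inequality yields strict complementary slackness at $f = 0^n$. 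Weak LP duality then identifies $0^n$ as an LP optimum, while the strict inequality forces uniqueness (any $f \neq 0^n$ would give a strictly larger primal objective than the dual certificate does), so LP decoding recovers the transmitted codeword. For the LDPC case, bounded check degree makes the number of $w_{j,S}$ variables $O(n)$, so the LP is of polynomial size and is solved in polynomial time by any standard algorithm.

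The main technical care I anticipate is in sign bookkeeping for the dual (matching $\gamma_i = \pm 1$ to the direction of each dual inequality) and in the short verification that Condition 1, stated only for pairs in $N(j)$, propagates to all even-weight subset sums on the check side. Both points are well-established in Proposition 2 and Sections 2--3 of \cite{FMSSW07}, which the theorem explicitly allows us to invoke, so the proof reduces to writing down the primal/dual pair explicitly and quoting those verifications.
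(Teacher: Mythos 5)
The paper offers no proof of Theorem \ref{thm:implicit} at all --- it is stated as implicit in \cite{FMSSW07} and justified only by pointing to Proposition 2 and Sections 2--3 of that paper --- and your sketch is precisely the dual-witness argument those citations contain, so it is essentially the same approach. The outline is correct: the observation that Condition 1 of Definition \ref{def:feasiblewei} forces at most one negative weight per check and hence $\sum_{i\in S}\tau_{i,j}\geq 0$ for every even-size $S\subseteq N(j)$, and the use of the strict inequality in Condition 2 together with complementary slackness to get uniqueness of the LP optimum, are exactly the right steps; a full write-up would only additionally need to invoke the $C$-symmetry of the fundamental polytope (from \cite{FWK05}) to justify reducing to the all-zeros codeword.
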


The nice point here is that one can use Theorem \ref{thm:implicit} without any background on the LP decoding technique. We also note that this Theorem is related to the LDPC codes in general, not necessarily to the expander
codes.

\section{Proof of Theorem \ref{thm:main}}\label{sec:proofmainthm}
We start the proof by defining the concept of matching (slightly different and simpler than in \cite{FMSSW07}).

\begin{definition}[$q$-Matching]\label{def:matching}
Let $M \subseteq E$ be a subset and let $q$ be an integer. With some abuse of notations we say that $M$ is a matching if for every $j\in R$ we have at most one $i\in [n]$ such that $\set{i,j}\in M$. Given a subset $U\subseteq [n]$ we say that $M$ is a $q$-matching with respect to $U$ if
\begin{enumerate}
\item $M$ is a matching.

\item For every $u\in U$ we have at least $q$ nodes $j\in R$ such that $\set{u,j}\in M$.
\end{enumerate}
\end{definition}

For the rest of the proof, let $U \subseteq [n]$ be the set of error locations in the input word, and \[\hat{U} = \set{i\in [n]\setminus U \ |\ |N(i)\cap N(U)|\geq (2\epsilon-1) c}.\] We let $U' = U \cup \hat{U}$.
The following proposition is similar to the corresponding proposition in \cite{FMSSW07}, where an appropriate modification was made to fit our definition of matching (Definition \ref{def:matching}). 

\begin{proposition}\label{prop:feasweight}
Let $U' \subseteq [n]$ and assume $\epsilon > 2/3$. If there exists an $(\epsilon c)$-matching $M$ with respect to $U'$ then there is a feasible edge weight assignment (as in Definition \ref{def:feasiblewei}).
\end{proposition}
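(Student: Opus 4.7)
The plan is to write down explicit weights $\tau_{i,j}$ driven by the matching $M$ and check both conditions of Definition~\ref{def:feasiblewei}. A preliminary step is to assume (after a standard augmenting-path rearrangement of $M$) that the matching is \emph{inward-preferring}: for every $u \in U'$ one has $|M(u) \cap N(U)| = \min\bigl(\epsilon c,\, |N(u) \cap N(U)|\bigr)$, where $M(u)$ denotes the checks matched to $u$. This is automatic for $u \in U$ (since $N(u) \subseteq N(U)$) and achievable for $u \in \hat U$ because every such $u$ has at least $(2\epsilon - 1)c$ edges into $N(U)$, leaving ample slack to re-route $\hat U$-matches into $N(U)$ inside the given $(\epsilon c)$-matching.

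I would then put nonzero values only on edges at matched checks that sit in $N(U)$: set $\tau_{u,j} = -a$ for every $(u,j) \in M$ with $j \in N(U)$, set $\tau_{i,j} = +a$ for every non-matched edge at such a check, and $\tau_{i,j} = 0$ on every remaining edge. The scalar $a$ is chosen from the open interval $\bigl(\frac{1}{(2\epsilon-1)c},\ \frac{1}{(2\epsilon-1)c-1}\bigr)$, which is non-empty because $\epsilon c \in \mathbb{Z}$ together with $\epsilon > 2/3$ forces $(2\epsilon-1)c$ to be an integer $\geq 2$. Condition~1 is then immediate: at a matched $j \in N(U)$ the unique negative weight $-a$ is paired with $+a$ on each other incident edge, and at every other check all weights are $0$.

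For Condition~2 I would treat three disjoint cases. When $i \in U$ every neighbour of $i$ lies in $N(U)$, so $\sum_j \tau_{i,j} \leq -a\epsilon c + a(1-\epsilon)c = -a(2\epsilon-1)c < -1$ by the lower bound on $a$. When $i \in \hat U$ the inward-preferring property bounds the sum by $a\bigl(|N(i) \cap N(U)| - 2|M(i) \cap N(U)|\bigr)$; substituting $|M(i) \cap N(U)| = \min(\epsilon c,\, |N(i) \cap N(U)|)$ and using $|N(i) \cap N(U)| \leq c$ shows this quantity is strictly negative in both subcases $|N(i) \cap N(U)| \geq \epsilon c$ and $|N(i) \cap N(U)| < \epsilon c$, hence $< 1$. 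The decisive case is $i \notin U'$: since $i$ has no matched edges, $\sum_j \tau_{i,j} = a \cdot |\{j \in N(i) \cap N(U) : j \text{ is matched}\}| \leq a \cdot |N(i) \cap N(U)|$, and the definition of $\hat U$ gives $|N(i) \cap N(U)| \leq (2\epsilon-1)c - 1$ as a strict integer inequality, so the upper bound on $a$ yields $\sum_j \tau_{i,j} < 1$.

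The main obstacle is the exchange step that turns the given $(\epsilon c)$-matching into an inward-preferring one; everything else reduces to a short arithmetic check, and the argument tightens precisely at the threshold $\epsilon = 2/3$, which accounts for the improvement over the original hypothesis $\epsilon > 2/3 + 1/(3c)$ of \cite{FMSSW07}.
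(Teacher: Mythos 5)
Your weight assignment and three-case check are close in spirit to the paper's proof, but the preliminary ``inward-preferring'' rearrangement of the matching is a genuine gap: it is asserted rather than proved, and it is false in general. Take $c=4$, $\epsilon=3/4$ (so $\epsilon c=3$ and $(2\epsilon-1)c=2$), $U=\{u_1\}$ with $N(u_1)=\{j_1,j_2,j_3,j_4\}$, and $\hat u$ with $N(\hat u)=\{j_1,j_2,j_5,j_6\}$, so $|N(\hat u)\cap N(U)|=2=(2\epsilon-1)c$ and $\hat u\in\hat U$. A $3$-matching with respect to $U'=\{u_1,\hat u\}$ exists (e.g.\ $u_1\mapsto\{j_2,j_3,j_4\}$ and $\hat u\mapsto\{j_1,j_5,j_6\}$), but in \emph{every} such matching $u_1$ must take three of its four neighbours and hence at least one of $j_1,j_2$, so $|M(\hat u)\cap N(U)|\le 1<2=\min(\epsilon c,\,|N(\hat u)\cap N(U)|)$. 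No augmenting-path exchange can help, because nodes of $U$ have all of their neighbours inside $N(U)$ and so leave no slack there. The step is load-bearing for you: without it, the only inequality that holds automatically is $|M(i)\cap N(U)|\ge |N(i)\cap N(U)|-(1-\epsilon)c$, which degrades your Case 2 bound from ``nonpositive'' to $a\,(|N(i)\cap N(U)|-2|M(i)\cap N(U)|)\le a\,(3-4\epsilon)c$, and making $a\,(3-4\epsilon)c<1$ compatible with $a>\frac{1}{(2\epsilon-1)c}$ is exactly where $\epsilon>2/3$ must enter. As written, your constraints on $a$ never invoke $\epsilon>2/3$ and would ``prove'' the proposition for every $\epsilon>1/2$; that should have been a warning sign.

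The clean repair is the paper's: put the negative weight $-x$ only on matched edges $(i,j)$ with $i\in U$ (not $i\in U'$), put $+x$ on the other edges at those checks, and $0$ everywhere else. Then for $i\in\hat U$ the $\epsilon c$ checks matched to $i$ are matched to no element of $U$ (since $M$ is a matching), so those edges carry weight $0$ and at most $(1-\epsilon)c$ remaining edges carry $+x$; the resulting window $\frac{1}{(2\epsilon-1)c}<x<\min\left\{\frac{1}{(1-\epsilon)c},\frac{1}{(2\epsilon-1)c-1}\right\}$ is nonempty precisely when $\epsilon>2/3$, and no rearrangement of $M$ is needed at all.
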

\begin{proof}
We assign weights as follows.
\begin{itemize}
\item For $j\in R$ such that there is an $i\in U$ and $(i,j)\in M$, set $\tau_{i,j}= -x$, and for all other $i'\in N(j)\setminus \set{i}$ set $\tau_{i',j}= +x$

\item For all other $j\in R$, for all $i\in N(j)$ set $\tau_{i,j}=0$.
\end{itemize}

Clearly, this weighting satisfies the first condition of a feasible weight assignment. For condition 2 we distinguish three cases. We recall that $\epsilon c$ is an integer. 

\begin{enumerate}
\item For $i\in U$ we have $\gamma_i = -1$. Also, at least $\epsilon c$ edges that are incident to $i$ are in $M$ (and each has weight $-x$). All other incident edges have weight $0$ or $+x$. So, the total weight of the incident edges to $i$ is at most $\epsilon c (-x) + (1-\epsilon)c(x) = cx(1-2\epsilon)$. This is less than $-1$ as long as $x > \frac{1}{(2\epsilon - 1)c}$.

\item If $i\in \hat{U}$ then $\gamma_i = +1$. At least $\epsilon c$ of $i$'s incident edges are in $M$, but (trivially) not incident to $U$. These edges have weight 0. The other incident edges have weight either 0 or $+x$.
So, the total weight is at most $(1-\epsilon) c x$ which is smaller than $+1$ as long as $x < \frac{1}{(1-\epsilon)c}$.

\item When $i\notin U \cup \hat{U}$ then $\gamma_i = +1$. We know that $i$ has less than $(2\epsilon-1) c$ neighbors in $N(U)$, i.e., at most $(2\epsilon-1) c-1$ neighbors in $N(U)$. All other edges have weight 0. I.e., the
total weight is at most $((2\epsilon-1) c -1) (x)$ which is smaller than $+1$ as long as $x < \frac{1}{(2\epsilon-1)c-1}$.
\end{enumerate}

Summarizing our conditions on $x$, we have $\frac{1}{(2\epsilon - 1)c} < x < \min\left\{\frac{1}{(1-\epsilon)c}, \frac{1}{(2\epsilon - 1)c-1}\right\}$. There is such an $x$ if $2\epsilon - 1 > 1-\epsilon$, i.e., $\epsilon > 2/3$.
\end{proof}

It is not hard to see that a sufficiently strong upper bound on $|U|$ implies an upper bound on $|U'|$.

\begin{proposition}\label{prop:boundset}
Assume that $|U| < \frac{3\epsilon-2}{2\epsilon-1} \cdot \lfloor \delta n \rfloor$. Then $|U'|=|U|+|\hat{U}| \leq \delta n$.
\end{proposition}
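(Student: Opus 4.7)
The plan is to argue by contradiction: assume $|U'| > \delta n$ and derive a contradiction with the hypothesis $|U| < \frac{3\epsilon-2}{2\epsilon-1}\lfloor \delta n \rfloor$. Since $\frac{3\epsilon-2}{2\epsilon-1}<1$ for $2/3 < \epsilon < 1$, the hypothesis in particular gives $|U| < \lfloor \delta n \rfloor$, so I can choose a subset $V \subseteq U'$ with $U \subseteq V$ and $|V| = \lfloor \delta n \rfloor$. Write $V = U \cup \hat V$ where $\hat V \subseteq \hat U$ has size $\lfloor \delta n \rfloor - |U|$.

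Now I apply expansion on both sides. Since $|V| \leq \delta n$, the expander property yields
\[
|N(V)| \ \geq\ \epsilon c\, |V| \ =\ \epsilon c\,(|U| + |\hat V|).
\]
For the matching upper bound I use the definition of $\hat{U}$: every $i \in \hat{V} \subseteq \hat{U}$ sends at least $(2\epsilon-1)c$ of its $c$ edges into $N(U)$, so it sends at most $(2-2\epsilon)c$ edges to vertices outside $N(U)$. Combined with the trivial bound $|N(U)| \leq c|U|$, this gives
\[
|N(V)| \ \leq\ |N(U)| + (2-2\epsilon)c|\hat V| \ \leq\ c|U| + (2-2\epsilon)c|\hat V|.
\]

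Comparing the two bounds and simplifying, I get $(1-\epsilon)|U| \geq (3\epsilon-2)|\hat V|$, which rearranges to $|\hat V| \leq \frac{1-\epsilon}{3\epsilon-2}|U|$. Plugging in $|\hat V| = \lfloor \delta n \rfloor - |U|$ and collecting terms gives $\lfloor \delta n \rfloor \leq \frac{2\epsilon-1}{3\epsilon-2}|U|$, i.e. $|U| \geq \frac{3\epsilon-2}{2\epsilon-1}\lfloor \delta n \rfloor$, contradicting the hypothesis.

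I do not foresee any real obstacle: the only things to be careful about are that $|U| < \lfloor \delta n \rfloor$ so that the auxiliary set $V$ of size exactly $\lfloor \delta n \rfloor$ containing $U$ exists, and that the denominator $3\epsilon - 2$ is positive (which is exactly where $\epsilon > 2/3$ enters, matching the same threshold used in Proposition~\ref{prop:feasweight}). The rest is a short double-counting argument.
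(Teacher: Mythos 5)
Your proof is correct and follows essentially the same route as the paper: both pad $U$ with a subset of $\hat U$ to a set of size exactly $\lfloor \delta n \rfloor$, then play the expansion lower bound on its neighborhood against the upper bound $c|U| + (2-2\epsilon)c|\hat V|$ coming from the definition of $\hat U$, arriving at the same inequality $|\hat V| \leq \frac{1-\epsilon}{3\epsilon-2}|U|$ and the same contradiction. The only differences are cosmetic (you phrase the conclusion as $|U| \geq \frac{3\epsilon-2}{2\epsilon-1}\lfloor \delta n \rfloor$ rather than $|U \cup U''| < \lfloor \delta n \rfloor$).
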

\begin{proof}
Recall that $\hat{U} = \set{i\in [n]\setminus U \ |\ |N(i)\cap N(U)| \geq (2\epsilon-1) c}$.

We prove that $|U \cup \hat{U}| < \lfloor \delta n \rfloor$. Assume the contrary and let $U'' \subseteq \hat{U}$ such that $|U \cup U''| = \lfloor \delta n \rfloor$ (note that $|U| < \lfloor \delta n \rfloor$).

We argue that $|U''| \leq \frac{1-\epsilon}{3\epsilon-2} |U|$ which implies that $|U \cup U''| = |U|+|U''| \leq \frac{2\epsilon-1}{3\epsilon-2} |U| < \lfloor \delta n \rfloor$ with contradiction. Due to the fact that $C$ is
a $(c,d,\epsilon,\delta)$-expander code we have
\[\epsilon c (|U|+|U''|) \leq |N(U \cup U'')| = |N(U)|+ |N(U'')\setminus N(U)| \leq |U|\cdot c + |U''|\cdot (c-(2\epsilon-1)c) .\]
Then we have $|U''| (3\epsilon -2)c \leq |U| (1-\epsilon)c$ and $|U''| \leq \frac{1-\epsilon}{3\epsilon-2} |U|$. Contradiction.
\end{proof}

We are ready to prove Theorem \ref{thm:main}.

\begin{proof}[Proof of Theorem \ref{thm:main}]
Let $U \subseteq [n]$ be a set of error locations. By assumption, $|U| \leq \frac{3\epsilon-2}{2\epsilon-1} \cdot \lfloor \delta n \rfloor$ and so, by Proposition \ref{prop:boundset} we have $|U'|=|U|+|\hat{U}| \leq \delta n$. By Corollary \ref{cor:hallthm} (stated below) there exists an $(\epsilon c)$-matching $M$ with respect to $U'$. By Proposition \ref{prop:feasweight} there exists a feasible edge weight assignment. Theorem \ref{thm:implicit} implies that the LP decoder succeeds and runs in polynomial time.
\end{proof}

\subsection{Hall's Theorem}
Let us recall that $G=(L,R,E)$ is the bipartite graph with vertex sets $L$ and $R$, and the edge set $E$. We also recall Definition \ref{def:matching}.
We let $X \subseteq L$ to be a subset of $L$.

\begin{theorem}[Hall's Marriage Theorem]\label{thm:hallthm}
Assume that for every $S \subseteq X$ we have $\vert N(S)\vert \geq \vert S\vert$. Then there exists a $1$-matching with respect to $X$.
\end{theorem}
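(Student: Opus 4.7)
The plan is to prove Hall's Marriage Theorem by strong induction on $|X|$, splitting the inductive step according to whether Hall's condition is tight on some proper nonempty subset of $X$. The base case $|X|=1$ is immediate: the hypothesis $|N(X)| \geq 1$ produces a neighbor $y$ for the unique $x \in X$, and $\{\{x,y\}\}$ is a $1$-matching.

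For the inductive step with $|X| = k \geq 2$, I would consider two cases. In the \emph{slack case}, where $|N(S)| \geq |S|+1$ for every nonempty proper $S \subsetneq X$, pick any $x \in X$ and any $y \in N(x)$, tentatively match the edge $\{x,y\}$, and apply induction to $X \setminus \{x\}$ in the bipartite graph with $y$ deleted from $R$. Deleting the single vertex $y$ shrinks each neighborhood by at most one, so the slack assumption $|N(S)| \geq |S|+1$ guarantees that Hall's condition survives for all subsets of $X \setminus \{x\}$; induction then yields a matching $M'$ of $X \setminus \{x\}$ avoiding $y$, and $M' \cup \{\{x,y\}\}$ is the desired $1$-matching for $X$.

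In the \emph{tight case}, where some nonempty proper $T \subsetneq X$ satisfies $|N(T)| = |T|$, apply induction twice. First, apply it to the subgraph induced by $T$ and $N(T)$ (Hall's condition restricted to subsets of $T$ is inherited from the original graph) to obtain a $1$-matching $M_1$ saturating $T$; since $|N(T)| = |T|$, this matching necessarily uses every vertex of $N(T)$. Second, apply induction to $X \setminus T$ in the bipartite graph obtained by deleting $N(T)$ from $R$. For any $S \subseteq X \setminus T$, applying Hall's hypothesis to $S \cup T$ gives $|N(S) \cup N(T)| = |N(S \cup T)| \geq |S| + |T|$, and subtracting $|N(T)| = |T|$ yields $|N(S) \setminus N(T)| \geq |S|$, so Hall's condition transfers to the reduced graph. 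Induction gives a matching $M_2$ saturating $X \setminus T$ and disjoint (in $R$) from $M_1$; the union $M_1 \cup M_2$ is a $1$-matching for $X$.

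The only step that is not purely mechanical is the bookkeeping in the tight case — specifically, the verification that Hall's condition survives after restricting to $X \setminus T$ and deleting $N(T)$ — but this falls out of a one-line inclusion-exclusion using $|N(T)| = |T|$. Everything else is standard induction.
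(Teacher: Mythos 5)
Your proof is correct; it is the classical Halmos--Vaughan argument: strong induction on $|X|$, splitting on whether some nonempty proper subset $T$ is \emph{tight} ($|N(T)|=|T|$), with the inclusion--exclusion step $|N(S)\setminus N(T)| \geq |S\cup T| - |N(T)| = |S|$ carrying Hall's condition over to the reduced graph. Note, however, that the paper does not prove this statement at all --- Hall's Marriage Theorem is quoted as a known classical fact and used only through the polygamous corollary (Corollary \ref{cor:hallthm}), so there is no in-paper proof to compare against; your argument is a standard, self-contained substitute for that citation. Two small points worth making explicit if you include it: in the slack case the set $S = X\setminus\{x\}$ is itself a proper subset, so the hypothesis $|N(S)|\geq|S|+1$ does apply to it after deleting $y$; and under this paper's Definition \ref{def:matching} a ``matching'' only constrains the $R$-side multiplicities, but your $M_1$ still occupies all $|T|$ vertices of $N(T)$, so the disjointness of $M_1$ and $M_2$ on the $R$-side goes through and $M_1\cup M_2$ is a legitimate $1$-matching with respect to $X$.
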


\begin{corollary}[Polygamous form of Hall's Theorem]\label{cor:hallthm}
Let $d \geq 1$ be an integer. Assume that for all $S \subseteq X$ we have $\vert N(S)\vert \geq d\vert S\vert$.
Then there exists $d$-matching with respect to $X$.
\end{corollary}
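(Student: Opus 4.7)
The plan is to deduce Corollary \ref{cor:hallthm} from the standard Hall's Marriage Theorem (Theorem \ref{thm:hallthm}) by a vertex-duplication trick. Concretely, I would construct an auxiliary bipartite graph $G' = (X', R, E')$ in which each vertex $x \in X$ is replaced by $d$ copies $(x,1), \ldots, (x,d)$; put $X' = X \times [d]$ and include an edge $\{(x,k), r\} \in E'$ whenever $\{x,r\} \in E$. A $1$-matching with respect to $X'$ in $G'$ that saturates every copy of every $x$ will immediately yield the desired $d$-matching with respect to $X$ in $G$, since the copies of a single $x$ get matched to $d$ \emph{distinct} neighbors in $R$, and the matching property in $G'$ forbids any $r \in R$ from being used twice.

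Next I would verify Hall's condition in $G'$. Given an arbitrary $S' \subseteq X'$, let $S \subseteq X$ denote its projection, i.e., $S = \{x \in X : (x,k) \in S' \text{ for some } k \in [d]\}$. Then $|S'| \leq d|S|$, and by construction the neighborhoods agree: $N_{G'}(S') = N_G(S)$ as subsets of $R$. The hypothesis of the corollary gives $|N_G(S)| \geq d|S| \geq |S'|$, which is exactly Hall's condition for $S'$.

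Having checked Hall's condition, Theorem \ref{thm:hallthm} applied to $G'$ (with $X'$ in place of $X$) produces a matching $M' \subseteq E'$ saturating $X'$. Finally I would translate $M'$ back to $G$ by defining $M = \{\{x,r\} : \{(x,k), r\} \in M' \text{ for some } k\}$ and verify the two clauses of Definition \ref{def:matching}: (i) no $r \in R$ appears in two edges of $M$, because that would require two distinct copies $(x,k)$ and $(x',k')$ to share $r$ in $M'$, contradicting the matching property of $M'$; and (ii) every $x \in X$ lies in at least $d$ edges of $M$, because its $d$ copies in $X'$ are each saturated by $M'$ with distinct partners in $R$.

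I do not expect any real obstacle; the only care needed is to make sure the reduction preserves the correct neighborhood identity $N_{G'}(S') = N_G(S)$ and to track that distinct copies of the same $x$ get matched to distinct vertices in $R$, both of which are immediate from the construction.
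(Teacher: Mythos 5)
Your proposal is correct and is exactly the paper's argument: the paper's one-line proof ("Replace each $x \in X$ with $d$ nodes connected to all nodes in $N(x)$; the corollary follows from Hall's theorem") is precisely your vertex-duplication construction, which you have merely written out with the routine verifications of Hall's condition in the auxiliary graph and of the translation back. No gaps; the details you supply are the ones the paper leaves implicit.
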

\begin{proof}
Replace each $x \in X$ with $d$ nodes connected to all nodes in $N(x)$. The corollary follows from the Hall's theorem (Theorem \ref{thm:hallthm}).
\end{proof}

\subsection*{Acknowledgements}
We thank the anonymous referees for valuable comments on an earlier version of this paper.  

\bibliography{LP_decoding}

\begin{thebibliography}{10}

\bibitem{ADS09}
Sanjeev Arora, Constantinos Daskalakis, and David Steurer.
\newblock Message passing algorithms and improved {LP} decoding.
\newblock In Michael Mitzenmacher, editor, {\em Proceedings of the 41st Annual
  {ACM} Symposium on Theory of Computing, {STOC} 2009, Bethesda, {MD}, {USA},
  May 31 - June 2, 2009}, pages 3--12. ACM, 2009.

\bibitem{Bur09}
David Burshtein.
\newblock Iterative approximate linear programming decoding of {LDPC} codes
  with linear complexity.
\newblock {\em IEEE Transactions on Information Theory}, 55(11):4835--4859,
  2009.

\bibitem{FMSSW07}
Jon Feldman, Tal Malkin, Rocco~A. Servedio, Clifford Stein, and Martin~J.
  Wainwright.
\newblock {LP} decoding corrects a constant fraction of errors.
\newblock {\em IEEE Transactions on Information Theory}, 53(1):82--89, 2007.

\bibitem{FWK05}
Jon Feldman, Martin~J. Wainwright, and David~R. Karger.
\newblock Using linear programming to decode binary linear codes.
\newblock {\em IEEE Transactions on Information Theory}, 51(3):954--972, 2005.

\bibitem{HalE11}
Nissim Halabi and Guy Even.
\newblock {LP} decoding of regular {LDPC} codes in memoryless channels.
\newblock {\em IEEE Transactions on Information Theory}, 57(2):887--897, 2011.

\bibitem{KoVo06c}
Ralf Koetter and Pascal~O. Vontobel.
\newblock On the block error probability of {LP} decoding of {LDPC} codes,
  February~25 2006.

\bibitem{RichardsonU01}
Thomas~J. Richardson and R{\"u}diger~L. Urbanke.
\newblock The capacity of low-density parity-check codes under message-passing
  decoding.
\newblock {\em IEEE Transactions on Information Theory}, 47(2):599--618, 2001.

\bibitem{SipSpielman}
Michael Sipser and Daniel~A. Spielman.
\newblock Expander codes.
\newblock {\em IEEE Transactions on Information Theory}, 42(6):1710--1722,
  1996.
\newblock Preliminary version appeared in FOCS 1994.

\bibitem{Sud00}
Madhu Sudan.
\newblock List decoding: algorithms and applications.
\newblock {\em SIGACT News}, 31(1):16--27, 2000.

\bibitem{TSS09}
Mohammad~H. Taghavi, Amin Shokrollahi, and Paul~H. Siegel.
\newblock Efficient implementation of linear programming decoding.
\newblock {\em IEEE Transactions on Information Theory}, 57(9):5960--5982,
  2011.

\bibitem{Tan81}
Robert~Michael Tanner.
\newblock A recursive approach to low complexity codes.
\newblock {\em IEEE Transactions on Information Theory}, 27(5):533--547, 1981.

\bibitem{Tre04}
Luca Trevisan.
\newblock Some applications of coding theory in computational complexity.
\newblock In {\em ECCC: Electronic Colloquium on Computational Complexity,
  technical reports}, 2004.

\bibitem{Vid12}
Michael Viderman.
\newblock Linear time decoding of regular expander codes.
\newblock In {\em Innovations in Theoretical Computer Science 2012 (ITCS 2012),
  Cambridge, {MA}, {USA}, January 8-10, 2012}, pages 168--182. ACM, 2012.

\bibitem{Von08}
Pascal~O. Vontobel.
\newblock Interior-point algorithms for linear-programming decoding.
\newblock In {\em Proceedings of the Information Theory and its Applications
  Workshop, UC San Diego, La Jolla, CA, USA}, 2008.

\bibitem{KoVo06a}
Pascal~O. Vontobel and Ralf Koetter.
\newblock Bounds on the threshold of linear programming decoding, February~25
  2006.

\bibitem{KoVo06b}
Pascal~O. Vontobel and Ralf Koetter.
\newblock Towards low-complexity linear-programming decoding, February~25 2006.

\end{thebibliography}
\bibliographystyle{plain}

\end{document}